\newtheorem{theorem}{Theorem}
\newtheorem{alg}{Algorithm}
\newcommand{\phat}{\widehat{p}}
\newcommand{\ghat}{\widehat g}
\newcommand{\veps}{\varepsilon}
\newcommand{\Jhat}{\widehat J}
\newcommand{\Ehat}{\widehat E}
\renewcommand{\hbar}{\overline h}
\newcommand{\Exp}{{\mathbb{E}}}
\newcommand{\one}{\mathbf{1}}
\newcommand{\vectornorm}[1]{\lVert#1\rVert}
\newcommand{\tu}{\widetilde u}
\newcommand{\SNR}{\text{signal to noise ratio}}
\newcommand{\swarm}{\mathbb{A}}
\begin{document}

\title{Bayesian Inference for Nonlinear Structural Time Series  Models}
\author{Jamie Hall\\{\small School of Economics}\\{\small University of New South Wales}\\{\small jamie1212@gmail.com}
\and Michael K. Pitt\\{\small Economics Department}\\{\small University of Warwick}\\{\small m.pitt@warwick.ac.uk}
\and Robert Kohn\\{\small School of Economics}\\{\small University of New South Wales}\\{\small r.kohn@unsw.edu.au} }
\date{\today}
\maketitle


\begin{abstract}
This article discusses a partially adapted particle filter for estimating
the likelihood of nonlinear structural econometric state space models whose state transition density
cannot be expressed in closed form.  The filter generates the disturbances in
the state transition equation and allows for multiple modes in the conditional disturbance distribution.
The particle filter produces an unbiased estimate of the likelihood
and so can be used to carry out Bayesian inference in a particle Markov chain Monte Carlo
framework. We show empirically that when the signal to noise ratio is high,
the new filter can be much more efficient
than the standard particle filter, in the sense that it requires far fewer particles to give the
same accuracy. The new filter is applied to several simulated and real examples
and in particular to a dynamic stochastic general equilibrium model.


\noindent\textit{Keywords}:  DSGE model; Multi-modal; Partially adapted particle filter; State space model

\end{abstract}

\section{Introduction}
For a general state space model the standard particle filter \citep{gordon_novel_1993} gives an unbiased estimate of the likelihood. \cite{andrieu_particle_2010} show that it is possible to use this unbiased estimate within a Markov chain Monte Carlo (MCMC) sampling scheme to carry out Bayesian inference for the parameters of the state space model. They call such a sampling scheme particle MCMC (PMCMC). PMCMC is particularly useful for Bayesian inference when the state space model is nonlinear or non-Gaussian so that the Kalman filter cannot be used. However, when the signal to noise ratio of the model is high, i.e., when the observation vector gives a very informative measurement on some combination(s) of elements of the state vector, the standard particle filter becomes a computationally inefficient importance sampler (see \cite{pitt_filtering_1999}).

For many models, this problem can be solved by using adapted  particle filters as in~\cite{pitt_filtering_1999}, which are more efficient as importance samplers. In addition, \cite{pitt_properties_2012}  show empirically that fully adapted particle filters used within PMCMC may require far fewer particles than the standard particle filter to achieve the same accuracy.

While using adapted particle filters can be much more efficient than the standard particle filter, most
adapted particle filters require that we can evaluate the state transition density. In important cases, this density is not easily available in closed form. This applies, in particular, to dynamic stochastic general equilibrium (DSGE) models, which are currently widely used in applied macroeconomics. This is in contrast to the standard particle filter which only requires that we can evaluate the observation density and simulate from the state transition density.

Our article proposes to solve this problem by using a partially adapted particle filter that generates the states by first generating
the disturbances in the state transition equation. The idea that a stochastic process may become more tractable when considered in terms of its innovations has a long history (see \cite{heunis_innovations_2011}). We employ it here because it provides a simple and useful solution to the problem described above. This approach has also been considered in recent independent work by \cite{murray_collapsed_2012}, who use it to estimate biological models with intractable transition densities. Our approach differs in its emphasis on solutions tailored for structural econometrics. Specifically, we demonstrate that a proposal based on a numerical optimisation algorithm and allowing for multiple modes in the disturbances by using mixtures, delivers large efficiency gains when applied to rational-expectations models with high signal-to-noise ratios when compared to the standard particle filter and the filter in \cite{murray_collapsed_2012}.
In \cite{murray_collapsed_2012}, the authors use a sigma-point approximation to the conditional densities of the disturbances.

Two other possible improvements to the particle filter for rational expectations models have been proposed in recent literature. \cite{amisano_euro_2010} demonstrate that the particle filter can be made more efficient if the proposal distribution for the state vector $x_t$ is conditioned on the first two moments of $x_{t-1}$, estimated from the particle swarm in the previous period. Second, \cite{andreasen_non-linear_2011} demonstrates an improvement when the proposal density for each particle is based on a central-difference Kalman filter with a rescaled covariance. While these contributions are valuable, we believe that the algorithm discussed here represents something of an improvement on these methods, since it is able to deliver results using a relatively small number of particles.

Section \ref{section: description of adpf} describes the partially adapted particle filter used in our article
and gives its properties. Section \ref{section: qar example}  demonstrates its performance on simulated data from a simple nonlinear time series model. Section~\ref{section: parameter estimation} demonstrates its application to nonlinear DSGE models: specifically, a neoclassical growth model and a consumption-based asset pricing model. Section \ref{section: conclusion} concludes.

\section{Auxiliary Disturbance Particle Filter}
\label{section: description of adpf}
This section discusses  a particle filter which is effective for certain classes of
models explored in this paper. This filter, which we call the auxiliary
disturbance particle filter (ADPF), is similar in principle to the auxiliary particle
filter but works by sampling the disturbances in the state equation
rather than the states themselves. More specifically, while the auxiliary
particle filter works by building an approximation to the joint density of
$(x_{t-1}^{k},x_{t})^{\prime}$, given the previous observations $y_{1:t-1}
=\{y_{1},...,y_{t-1}\}$ and $y_{t}$, the ADPF  builds  an
approximation to $(x_{t-1}^{k},u_{t})^{\prime}$, given $y_{1:t}$,
where $u_{t}$ is the disturbance term in the state equation. To simplify the
notation in this section, we often omit to show dependence on unknown parameters.

\subsection{General principles}
Consider the state space model with measurement density $p(y_t|x_t)$
and state transition equation
\begin{align}
x_{t} &  =h(x_{t-1}, u_{t}), \label{eq: state transition}
\end{align}
where $u_t$ is an independent sequence, $h(x_{t-1}, u_{t})$ is a nonlinear
function of $x_{t-1}$ and $u_t$.
Given a sample $y_{1:T} = \{y_1, \dots, y_T\}$, and the unknown parameters,
the likelihood is $p(y_{1:T})  =p(y_1)\prod_{t=2}^T  p(y_t|y_{1:t-1}) . $
When the function $h(\cdot)$ is nonlinear,
it is usually impossible to evaluate the likelihood exactly, but we can estimate it
using one of a number of particle filters. The standard particle filter \citep{gordon_novel_1993}
can be used whenever it is possible to evaluate the measurement density $p(y_t|x_t)$
and generate from the state transition equation \eqref{eq: state transition}.
However, the standard particle filter can be quite inefficient in the sense that it produces a likelihood
estimate with a large variance compared to more sophisticated particle filters \citep[see][]{pitt_properties_2012}.
\cite{pitt_filtering_1999} suggest a class of auxiliary particle filters that can be much more efficient than the standard particle filter, but most of these require the evaluation of the
density of the state transition equation. However,
for  many cases that are of interest to us, it is infeasible to evaluate the state transition equation density which means that existing auxiliary particle filters cannot be used.

The ADPF attempts to overcome this problem by approximating the state space model with measurement density $p(y_t|x_t)$ and state transition equation~\eqref{eq: state transition} as follows. Suppose that the density $g(y_{t+1}|x_t)$ approximates $p(y_{t+1}|x_t)$ and
the density $g(u_{t+1}|y_{t+1},x_t)$ approximates the density $p(u_{t+1}|y_{t+1},x_t)$ and that
 we can evaluate $p(y_{t+1}|x_t)$ and $g(u_{t+1}|y_{t+1},x_t)$ and generate from
$g(u_{t+1}|y_{t+1},x_t)$.
The choice of this approximate density may be model-specific, although we discuss a general implementation in Section~\ref{section: adpf implementation}.

To explain intuitively how the ADPF is constructed, suppose that
 $\{(x_t^k,\pi_t^k), k=1, \dots, N\}$ is a swarm of particles generated
from an approximation to $p(x_t|y_{1:t})$, so that $x_t^k$ has associated weight $\pi_t^k$.
Now, define
\begin{align*}
\phat(dx_t|y_{1:t}) & = \sum_{k=1}^N \pi_{t}^{k} \delta_{x_t^k}(dx_t )\,
\end{align*}
where $\delta_s(dx)$ is the Dirac delta distribution centered at $s$. We shall show how to construct $\phat(dx_{t+1}|y_{1:t+1}) $.

Suppose we wish to
estimate $\Exp (m(x_{t+1})|y_{1:t+1}) $, where $m(\cdot) $
is a function of $x_{t+1}$, assuming that the expectation exists.
Define the functional
\begin{align}
J_{t+1}(m) & = \int m(x_{t+1}) p(y_{t+1}|x_t,u_{t+1})p(x_t, u_{t+1}|y_{1:t}) dx_t du_{t+1} , \notag \\
& = \int m(x_{t+1}) p(y_{t+1}|x_t,u_{t+1})p(u_{t+1})p(x_t|y_{1:t}) dx_t du_{t+1},\notag \\
& = \int m(x_{t+1})p(y_{t+1}|x_t) p(u_{t+1}|y_{t+1}, x_t) p(x_t|y_{1:t})  dx_t du_{t+1},
\label{eq: exact m}
\end{align}
using the identity $p(y_{t+1}|x_t,u_{t+1})p(u_{t+1}) = p(y_{t+1}|x_t) p(u_{t+1}|y_{t+1}, x_t)$.
Then, it is straightforward to check that $\Exp (m(x_{t+1})|y_{1:t+1}) = J_{t+1}(m)/J_{t+1}(\one)$, where
$\one$ is the unit function, and $p(y_{t+1}|y_{1:t}) = J_{t+1}(\one)$.

We approximate $J_{t+1}(m)$ by replacing $p(x_t|y_{1:t})dt $ in \eqref{eq: exact m} by
$\phat(dx_t|y_{1:t})$ to obtain,
\begin{align*}
J_{t+1}(m)
 & \approx  \int m(x_{t+1}) p(y_{t+1}|h(x_{t}; u_{t+1})) p(u_{t+1})\phat (x_{t}|y_{1:t}) du_{t+1} dx_{t}\\
 & = \int m(x_{t+1}) \frac{ p(y_{t+1}|h(x_{t}; u_{t+1})) p(u_{t+1})}
 {  g(u_{t+1}|y_{t+1},x_t) g( y_{t+1}|x_t) } g( y_{t+1}|x_t) g(u_{t+1}|y_{t+1},x_t)
 \phat (dx_{t}|y_{1:t}) du_{t+1} \\
 & = \left (  \sum_{i=1}^{N}\omega_{t|t+1}^{i}   \right )
 \int m(x_{t+1})\frac{ p(y_{t+1}|h(x_{t}, u_{t+1}) p(u_{t+1})}
 {  g(u_{t+1}|y_{t+1},x_t) g( y_{t+1}|x_t) }  g(u_{t+1}|y_{t+1},x_t) \ghat_N (dx_{t}|y_{1:t+1}) du_{t+1}
 \intertext{where}
\ghat_N (dx_{t}|y_{1:t+1})  & = \sum_{k=1}^N \pi_{t|t+1}^{k} \delta_{x_t^k}(dx_t )
 , \pi_{t|t+1}^{k}=\frac
{\omega_{t|t+1}^{k}}{\sum_{i=1}^{N}\omega_{t|t+1}^{i}}
\quad  \text{with} \quad \omega_{t|t+1}^{k}
 =g(y_{t+1}|x_{t}^{k})\pi_{t}^{k} .
\end{align*}
Suppose that $u_{t+1}^k \sim g(u_{t+1}|y_{t+1},x_t^k) $ and $x_{t+1}^k = h(x_t^k,  u_{t+1}^k)$, and put
\begin{align*}
\omega_{t+1}^{k} & = \frac{ p(y_{t+1}|x_{t+1}^k ) p(u_{t+1}^k)}
 {  g(u_{t+1}^k|y_{t+1},x_t^k) g ( y_{t+1}|x_t^k) }, \quad \pi_{t+1}^{k}=\frac{\omega_{t+1}^{k}} %
{\sum_{i=1}^{N}\omega_{t+1}^{i}}\ .
\end{align*}
We obtain the estimate
\begin{align}\label{eq: expected value m }
\Jhat _{t+1}(m) & = \left (  \sum_{i=1}^{N}\omega_{t|t+1}^{i}   \right )
\left ( \sum_{i=1}^{N}\omega_{t+1}^{i} \right) \sum_{k=1}^N m(x_{t+1}^k) \pi_{t+1}^{k}\\
\text{with} \quad \Jhat_{t+1}(1) & = \left (  \sum_{i=1}^{N}\omega_{t|t+1}^{i}   \right )
\left ( \sum_{i=1}^{N}\omega_{t+1}^{i} \right),
\end{align}
so that
\begin{align*}
\Ehat(m(x_{t+1})|y_{1:t+1}) & = \sum_{k=1}^N m(x_{t+1}^k) \pi_{t+1}^{k} \quad \text{and} \quad
\phat(y_{t+1}|y_{1:t}) & = \left (  \sum_{i=1}^{N}\omega_{t|t+1}^{i}   \right )
\left ( \sum_{i=1}^{N}\omega_{t+1}^{i} \right).
\end{align*}

This suggests that we take $\{(x_{t+1}^k, \pi_{t+1}^k), k=1, \dots, N\}$ as the swarm
of particles that we use to approximate $p(x_{t+1}|y_{1:t+1})$ and define
\begin{align*}
\phat(dx_{t+1}|y_{1:t+1}) & = \sum_{k=1}^N \pi_{t+1}^{k} \delta_{x_{t+1}^k}(dx_t )\ .
\end{align*}

The following algorithm formally describes the  ADPF and is
initialized with a sample $x_{0}^{k}\sim p(x_{0})$ with mass $\pi_0^k = 1/N$ for
$k=1,...,N$.

\begin{alg}
\label{alg:apdf}
\textsl{For }$\mathit{t=0,..,T-1}$,
\textsl{given samples }$x_{t}^{k}\sim p(x_{t}|y_{1:t})$\textsl{\ with mass }$\pi_{t}^{k}$ \textsl{for }$k=1,...,N.$

\begin{enumerate}
\item \textsl{For }$k=1:N,$\textsl{\ compute }$\omega_{t|t+1}^{k}%
=g(y_{t+1}|x_{t}^{k})\pi_{t}^{k},$ \ \ \ \ $\pi_{t|t+1}^{k}=\frac
{\omega_{t|t+1}^{k}}{\sum_{i=1}^{N}\omega_{t|t+1}^{i}}.$

\item \textsl{For }$k=1:N,$\textsl{\ sample }$\widetilde{x}_{t}^{k}\sim
\sum_{i=1}^{N}\pi_{t|t+1}^{i}\delta_{x_t^i}(dx_{t}).$

\item \textsl{For }$k=1:N,$\textsl{\ sample }$u_{t+1}^{k}\sim g(u_{t+1}%
|\widetilde{x}_{t}^{k};y_{t+1})$ and put $x_{t+1}^k = h(x_t^k; u_{t+1}^k)$.

\item {\normalsize \textsl{For }$k=1:N,$\textsl{\ compute}%
\[
\omega_{t+1}^{k}=\frac{p(y_{t+1}|x_{t+1}^{k})p(u_{t+1}^{k})}{
g(y_{t+1}|\widetilde{x}_{t}^{k})g(u_{t+1}^{k}|\widetilde{x}_{t}%
^{k};y_{t+1})},\text{ \ \ \ \ \ \ }\pi_{t+1}^{k}=\frac{\omega_{t+1}^{k}}%
{\sum_{i=1}^{N}\omega_{t+1}^{i}}.
\]
}
\end{enumerate}
\end{alg}

The estimate of the likelihood corresponding to the ADPF is
\begin{align} \label{eq: likel estimate adpf}
\phat (y_{1:T}) & = \prod_{t=1}^T
\left (  \sum_{i=1}^{N}\omega_{t-1|t}^{i}   \right )
\left ( \sum_{i=1}^{N}\omega_{t}^{i} \right)
\end{align}

By \cite{andrieu_particle_2010}) we can write this estimate of the likelihood as $\phat_N (y_{1:t}|\theta, \zeta ) $ where $\zeta$ consists of a set canonical random variables that are used to construct the estimate and that have density $p_N(\zeta)$.
Without loss of generality we can assume that elements of
$\zeta$ are uniform independent random variates. The estimated likelihood also depends on the number of particles $N$.

\begin{theorem} \label{Thm: unbiased}
The estimate $\phat_N (y_{1:T}| \zeta )$ is unbiased in the sense that
\begin{align*}
\int \phat_N(y_{1:t}| \zeta ) p_N(\zeta) d\zeta & = p(y_{1:T}) .
\end{align*}
The proof is in Appendix~\ref{Proof: Thm 1}.
\end{theorem}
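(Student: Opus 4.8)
The plan is to prove, by induction on $t$, the stronger statement that for every bounded test function $m$,
\[
\Exp\left[\phat_N(y_{1:t}|\zeta)\sum_{k=1}^N \pi_t^k\, m(x_t^k)\right] = \int m(x_t)\, p(x_t,y_{1:t})\, dx_t,
\]
where the expectation is taken over all canonical variables $\zeta$ generated through step $t$ of Algorithm~\ref{alg:apdf}. Taking $m\equiv\one$ collapses the right-hand side to $p(y_{1:T})$ and the left-hand side to $\int \phat_N(y_{1:T}|\zeta)p_N(\zeta)\,d\zeta$, which is exactly the assertion of the theorem. The base case follows from the initialization $x_0^k\sim p(x_0)$, $\pi_0^k=1/N$, for which the first pair of weighting stages reproduces $p(y_1)\,\Exp(m(x_1)|y_1)$ after averaging over the draws.

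For the inductive step I would introduce the filtration $\calF_t$ generated by all variables produced up to and including the swarm $\{(x_t^k,\pi_t^k)\}$, and use the tower property $\Exp[\,\cdot\,]=\Exp\big[\Exp[\,\cdot\mid\calF_t]\big]$. Writing $\phat_N(y_{1:t+1}|\zeta)=\phat_N(y_{1:t}|\zeta)\,(\sum_i\omega_{t|t+1}^i)(\sum_i\omega_{t+1}^i)$ and cancelling the normalizer hidden in $\pi_{t+1}^k$, the object whose conditional expectation I must compute is $\phat_N(y_{1:t}|\zeta)\,(\sum_i\omega_{t|t+1}^i)\sum_k \omega_{t+1}^k\,m(x_{t+1}^k)$. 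Since $\phat_N(y_{1:t}|\zeta)$ and the first-stage weights are $\calF_t$-measurable, the work reduces to evaluating $\Exp[\omega_{t+1}^k\, m(x_{t+1}^k)\mid\calF_t]$ for a single $k$. Conditioning further on the resampled ancestor $\widetilde x_t^k$ and using that $u_{t+1}^k\sim g(u_{t+1}\mid\widetilde x_t^k;y_{t+1})$, the proposal density $g(u_{t+1}\mid\cdot)$ in the denominator of $\omega_{t+1}^k$ cancels against the sampling density. This cancellation is the crux of the argument, and it is exactly where the disturbance-based importance weight earns its keep.

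After that cancellation the per-particle conditional expectation equals $g(y_{t+1}\mid\widetilde x_t^k)^{-1}\int m(h(\widetilde x_t^k,u))\,p(y_{t+1}\mid h(\widetilde x_t^k,u))\,p(u)\,du$, and averaging over the resampling law $\pi_{t|t+1}^i\propto g(y_{t+1}\mid x_t^i)\pi_t^i$ makes the remaining $g(y_{t+1}\mid\cdot)$ factor cancel against the first-stage weight, while the leading factor $(\sum_i\omega_{t|t+1}^i)$ cancels the resampling normalizer. What survives is $\phat_N(y_{1:t}|\zeta)\sum_i \pi_t^i\,\widetilde m(x_t^i)$, where $\widetilde m(x_t):=\int m(x_{t+1})\,p(y_{t+1}\mid x_t,u_{t+1})\,p(u_{t+1})\,du_{t+1}$. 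Applying the induction hypothesis with the new test function $\widetilde m$ yields $\int \widetilde m(x_t)\,p(x_t,y_{1:t})\,dx_t$; a change of variables from $u_{t+1}$ to $x_{t+1}=h(x_t,u_{t+1})$ together with the prediction--update recursion identifies this with $\int m(x_{t+1})\,p(x_{t+1},y_{1:t+1})\,dx_{t+1}$, which closes the induction.

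I expect the main obstacle to be bookkeeping rather than conceptual: arranging the conditioning so that the resampling step and the disturbance draws are handled rigorously, and carefully tracking the normalizing constants of the two weighting stages so that the telescoping product in \eqref{eq: likel estimate adpf} reproduces $p(y_{1:T})$ exactly. A cleaner, if less explicit, alternative I would mention is to recognise the ADPF as an auxiliary sequential Monte Carlo sampler on the extended space of ancestors and disturbances; the per-particle computation above verifies the proper-weighting condition, so the unbiasedness of the associated normalizing-constant estimator invoked by \cite{andrieu_particle_2010} applies directly.
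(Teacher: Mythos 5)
Your argument is correct in substance, but it takes a genuinely different route from the paper's. The appendix proof here is a backward, telescoping argument borrowed wholesale from Theorem 1 of \cite{pitt_properties_2012}: it states the one-step identity $\Exp\left(\phat_N(y_t|y_{1:t-1})\mid\swarm_{t-1}\right)=\sum_k p(y_t|x_{t-1}^k)\pi_{t-1}^k$, extends it by backward induction on the horizon $h$ to $\Exp\left(\phat_N(y_{t-h:t}|y_{1:t-h-1})\mid\swarm_{t-h-1}\right)=\sum_k p(y_{t-h:t}|x_{t-h-1}^k)\pi_{t-h-1}^k$ (invoking Lemmas 6 and 7 of that reference), then sets $h=t-1$ and integrates over the initial swarm $x_0^k\sim p(x_0)$, $\pi_0^k=1/N$. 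Your forward induction instead propagates an invariant over arbitrary bounded test functions, $\Exp\left[\phat_N(y_{1:t})\sum_k\pi_t^k m(x_t^k)\right]=\int m(x_t)\,p(x_t,y_{1:t})\,dx_t$; the paper's multistep identity is exactly the conditional version of your invariant specialised to the (integrable but unbounded) test functions $m(x)=p(y_{s:t}|x)$, run in the opposite direction. The computational core is the same in both: the cancellation of $g(u_{t+1}|y_{t+1},\widetilde x_t^k)$ against the sampling density followed by the cancellation of $g(y_{t+1}|\cdot)$ against the first-stage weight, which is precisely the verification that the disturbance-based weights have conditional expectation $p(y_{t+1}|\widetilde x_t^k)/g(y_{t+1}|\widetilde x_t^k)$. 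The paper outsources this to the cited lemmas, whereas you exhibit it explicitly --- which is arguably the more honest course here, since those lemmas were stated for state proposals and one must check that they survive the passage to disturbance proposals (they do, via the identity $p(y_{t+1}|x_t,u)p(u)=p(y_{t+1}|x_t)p(u|y_{t+1},x_t)$ that you use). Your closing alternative --- viewing the ADPF as an auxiliary SMC sampler on the extended space and invoking the generic unbiasedness result --- matches the paper's own first remark, which points to Proposition 7.4.1 of \cite{delmoral:2004}. One small technical point: boundedness of $m$ is not literally preserved by your update, but $\widetilde m(x_t)\le \vectornorm{m}_\infty\, p(y_{t+1}|x_t)$, so a bounded measurement density suffices.

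One bookkeeping caveat, which you inherit from the paper rather than introduce: after resampling, the $N$ pairs $(\widetilde x_t^k,u_{t+1}^k)$ are conditionally i.i.d., so your conditional expectation is $\Exp\left[\left(\sum_i\omega_{t|t+1}^i\right)\sum_k\omega_{t+1}^k m(x_{t+1}^k)\mid\calF_t\right]=N\sum_i\pi_t^i\,\widetilde m(x_t^i)$. The first-stage normalizer cancels exactly as you say, but a factor of $N$ survives, so the induction closes only if the per-period likelihood factor is $\left(\sum_i\omega_{t|t+1}^i\right)\left(N^{-1}\sum_i\omega_{t+1}^i\right)$. The displayed estimator \eqref{eq: likel estimate adpf} omits this $N^{-1}$ (the second-stage weights are built from equally weighted resampled particles), and it is needed for consistency with \eqref{eq: one step}; your phrase about the leading factor cancelling the resampling normalizer glosses over this. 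It is a typo in the paper and does not affect the substance of your proof, but a careful write-up should restore the $N^{-1}$ so that the telescoping product reproduces $p(y_{1:T})$ exactly.
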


\section{Particle filter performance on a first order nonlinear autoregressive model}
\label{section: qar example}

\subsection{The model}
Consider the following univariate nonlinear time series model,
\begin{align}
y_t &= x_t + \sigma_{\varepsilon} \varepsilon_t \\
x_t &= \phi x_{t-1} + \sigma_u \left( u_t + \delta u_t^2 \right)
\label{qar1_state_eqn}
\end{align}
where $\varepsilon_t$ and $u_t$ are \textit{iid} standard normal random variables. We choose this model because it is one of the simplest nonlinear extensions to a well-understood linear model.
When $\sigma_\veps$ is small it is also one of the simplest examples of the class of structural economic models that we consider below.
The parameter $\delta$ controls the degree of nonlinearity in the model; with $\delta=0$ the model is a first order autoregressive (AR(1)) model  with observation noise. When $\delta$ exceeds about 0.5, the behavior of the model
becomes noticeably different to that of a linear model. For a comprehensive analysis of this general class of models, see \cite{aruoba_new_2011}.

\subsection{ADPF implementation and testing}
\label{section: adpf implementation}

We choose a normal distribution with mean $\phi x_{t-1} + \sigma_u \delta$ and variance $\sigma_\varepsilon^2 + \sigma_u^2(1 + 2 \delta^2)$ for the approximating density $g(y_t | x_{t-1})$ because it matches the first two moments of $y_t$, conditional on $x_{t-1}$.
To estimate $g(u_t | y_t, x_{t-1}^k)$, we proceed as follows. Let $\ell^k(u) = \log p(y_t | u, x_{t-1}^k) \phi(u; 0,1)$ for given $x_{t-1}^k$ and $y_t$, where $\phi(u; a, b^2)$ is the univariate
 normal density in $u$ with mean $a$ and variance $b^2$.
 We numerically maximise $\ell^k(u)$ over $u$, subject to equation~(\ref{qar1_state_eqn}), and initialise with a draw from $\phi(u; 0, 2^2)$,
which is reasonable because $u$ has density $\phi(u; 0,1)$. Let $\tu_t^k$ be the mode of $ \ell^k(u)$.
 Then we obtain a normal approximation $\phi(u;\tu_t^k , \Delta_t^k) $ to $ \ell^k(u)$, where
 $\Delta^k_t =  -\left( \partial^2 \ell^k(u) / \partial u^2  \right)^{-1}$ evaluated at $u = \tu_t^k$.
Appendix~\ref{app: optimization} gives more details on the optimization procedure used here, and in the next example.

The normal approximation must be renormalised to adjust for the presence of multiple local modes. Multiple modes occur because the law of motion is a nonlinear polynomial, so that $\ell^k(u)$ is also nonlinear; in this case, a quartic equation. In other words, a given observation $y_t$ might have been generated by more than one possible value of $u_t$. For that reason, a simple normal approximation is inadequate. To see this, consider the limit as $\sigma_\varepsilon \to 0$. In this case, the model becomes deterministic, and almost any realised level of $y_t$ is then consistent with two values of $u_t$. Suppose, in a particular case, we label those two values $u^{(1)}_t$ and $u^{(2)}_t$. The algorithm, as described so far, would place a mass of 1 on the value of $u_t$ that the numerical minimiser found; that is, if it happened to find the first mode, it would sample $u_t$ from $ g(u_t | y_t, x_t^{k}) = \phi(u_t; u^{(1)}_t , \Delta_t^{(1)})$,  where $\Delta_t^{(1)} = -\left( \partial^2 \ell^k(u) / \partial u^2  \right)^{-1}$ evaluated at $u = u_t^{(1)}$.

Figure~\ref{fig: bimodal densities} illustrates this point. The left-hand panel plots $x_t$ as a function of $u_t$, that is, equation~(\ref{qar1_state_eqn}). This plot is assumes a previous value $x_{t-1}=0$, and parameter values $\delta=0.5$, $\sigma_\varepsilon=0.2$, $\sigma_u = 1$. Any value of the underlying state $x_t > -0.5$ is consistent with two possible values of $u_t$. For example, a value of $x_t=0.5$ could have been generated by $u_t\approx 0.41$ or $u_t\approx-2.4$. The right-hand panel plots an unnormalised version of $\ell^k(u)$, conditional on an observation $y_t=0.5$. The distribution is obviously bimodal; with the noise variance $\sigma_\varepsilon$ small in this case, the two modes of $\ell^k(u)$ are close to 0.41 and $-2.4$.

With $\sigma_\varepsilon \to 0$, the simple maximisation approach described above would place a weight of 1 on, say, the value of $u=0.41$, ignoring the other possibility. This is not an accurate approximation to the true $p(u_t | y_t, x_t^{k}))$, and will produce what appears to be a biased estimate of the log-likelihood. Although the ADPF is unbiased no matter what proposal density $g(u_t | y_t, x_{t-1}^{k})$ is used, in cases with a high signal to noise ratio, a very large number of particles are required to counteract the problem described in the text, which defeats the purpose of using the filter. Instead, a better approximation is given by the mixture density
\begin{equation}
\label{eq: mixture-density-1}
g(u_t | y_t, x_t^{k}) = \phi(u^{(1)}_t; 0,1) \phi (u_t; u^{(1)}_t , \Delta_t^{(1)}) + \phi(u^{(2)}_t; 0,1) \phi(u_t; u^{(2)}_t , \Delta_t^{(2)}).
\end{equation}
In this simple model, it is feasible to search for both modes $u_t^{(1)}$ and $u_t^{(2)}$,
but we develop a more general approach, with the goal of constraining the
required computation time as the dimension of the model increases.
Given a set of estimates $\tu_t^k$, obtained as described above,
we form a proposal density for each disturbance $u^j_t, j = 1, \dots, N$,
by taking an equally-weighted mixture of those $\tu^k$ that generate
a value for $y_t$ within 3 standard deviations of the observed value. That is,
\begin{equation}
\label{eq: mixture density for u}
g(u^j_t | y_t, x_t^{j}) \propto \sum_{i=1}^N {\chi}\left[ \left( \frac{y_t - \phi x^j_{t-1} - \sigma_u \left( u^i_t + \delta (u^i_t)^2 \right)}{\sigma_\varepsilon} \right)^2 \leq 3^2 \right] \phi \left(u_t^j; \tu^i_t , \Delta^{(i)} \right)
\end{equation}
where $\chi\left[\omega\right]$ is the characteristic or indicator function of event $\omega$. This takes advantage of the fact that values of $x^k_{t-1}$ tend to be clustered, so that a value of $\tu_t^k$ found for a particular $x^k_{t-1}$ has a good chance of working well for another $x^j_{t-1}$, in the sense of implying a mean value of $y_t$ close to the observed one. This method of generating a proposal generalises to more complex models, such as the DSGE example discussed in Section \ref{section: parameter estimation} below. Note that the proposal density is reweighted by the true density in step 4 of the ADPF algorithm. Thus, there is no need to ensure that (for instance) the proposal density puts the correct mass on different possible values of $u$, as in equation (\ref{eq: mixture-density-1}). All that is required is that each possible value of $u$ has a reasonable chance of being used in the proposal.

\begin{center}
{Figure \ref{fig: bimodal densities} about here}
\end{center}

We performed a simulation study to compare the performance of the ADPF with the standard SIR particle filter and
the `CUPF1' algorithm described in \cite{murray_collapsed_2012}. That algorithm has a similar structure to the ADPF, except that the first-stage proposal density $g(y_t | x_{t-1}^k)$ and the second-stage density $g(u_t^k | y_t, x_{t-1}^k)$ are both given by the Unscented Kalman Filter \citep{wan_unscented_2000}, run individually for each particle. We briefly describe the  Unscented Kalman Filter, with a full description given by
\cite{wan_unscented_2000} and \cite{van_der_merwe_unscented_2001}. In the Unscented Kalman Filter
the prior mean $x_{t-1}$ is propagated through the model's law of motion, along with a set of `sigma points' $x_{t-1} \pm \sqrt{\lambda P_{t-1}}_{(i)}$, where $P_{t-1}$ is the prior state covariance augmented with the covariance of the disturbances and noise terms, $\lambda$ is a parameter of the algorithm, $\sqrt{\cdot}$ denotes a matrix square root, and $P_{(i)}$ is the $i^th$ row of the matrix.\footnote{More precisely, $\lambda$ is a function of parameters $\alpha$, $\beta$ and $\kappa$. We used the values suggested in \cite{wan_unscented_2000}, namely $\alpha=10^{-3}$, $\beta=2$, $\kappa=0$. We experimented with different values, but found that this did not appear to alter the results substantially.} After the sigma points are propagated through the law of motion and the observation equation, we obtain an accurate estimate of the mean and covariance of the state and disturbances, conditional on the observation $y_t$.

We evaluated the filters on four different parameter settings: with either $\delta = 0.7$ (a high nonlinearity case) or $\delta = 0.1$ (low nonlinearity), and with either $\sigma_e = 0.01$ (a high signal to noise (SNR) ratio) or $\sigma_e = 1.0$ (low SNR). In all cases, we set $\sigma_u = 1$ and $\phi=0.6$.

For each test, we simulated a single dataset of 50 observations. We chose this number of observations because it roughly corresponds to the length of a quarterly macroeconomic data series. Using 1000 replications, we calculated the median log-likelihood estimated by the filters on each dataset, along with the interquartile range of their estimates, as well as the standard deviation of their log-likelihood estimates. Additionally, we used a standard particle filter with 1,000,000 particles to estimate the true value of the loglikelihood, which we used to estimate the bias of the  loglikelihood estimates of the other filters.

\subsection{Results}

The performance of the ADPF  is comparable to that of the standard SIR filter in the two cases with low single-to-noise ratio. In both scenarios, the standard deviation of the log-likelihood estimates from the ADPF with 50 particles is between those from the standard particle filter with 100 and 500 particles. This is similar to the performance of the fully adapted particle filter, which makes no substantial improvement on the standard particle filter when the \SNR{} is relatively low.

The situation is different with a high SNR, reported in Tables~\ref{low-non-high-snr-table} and \ref{high-non-high-snr-table}. In these cases, the ADPF  proposal draws from $p(u_t | y_t, x_{t-1})$ are much more useful than the draws from the proposal
$p(x_t | x_{t-1})$ used by the standard particle filter, because the observation $y_t$ is highly informative about the current position of $x_t$ (and therefore of $u_t$). As a result, the precision of the ADPF with 50 particles is similar to the standard particle filter's with more than 7500 when the model is markedly nonlinear, and more than 15000 in the approximately linear case. Additionally, the CUPF1 variation does not appear to be well adapted to this class of model, perhaps because more than the first two moments are required for a good approximation to the target density.

\begin{center}{Tables~\ref{low-non-high-snr-table} and  \ref{high-non-high-snr-table} about here.} \end{center}

Tables~\ref{low-non-low-snr-bias-table} to \ref{high-non-high-snr-bias-table} report the estimated bias and variance in  the log-likelihood estimates
for the four combinations of nonlinearity and \SNR.
The asymptotic analysis in \cite{pitt_properties_2012} suggests that the log-likelihood estimates should have a bias approximately equal to $-0.5$
times their variance, and that the variance should decrease in proportion to the number of particles used. Our simulations of the simple quadratic AR(1) model are broadly consistent with these expectations, with the predictions  borne out well in the low \SNR{} cases. The high \SNR{} cases reported in Tables~\ref{low-non-high-snr-bias-table} and \ref{high-non-high-snr-bias-table} appear to be consistent with the predictions of \cite{pitt_properties_2012} as the number of particles becomes large, though the uncertainty around these estimates is larger in these cases.

\begin{center}{Tables~\ref{low-non-low-snr-bias-table} to \ref{high-non-high-snr-bias-table}  about here.}\end{center}

\section{Parameter estimation}
\label{section: parameter estimation}
\subsection{Example 1: Neoclassical growth model}
\subsubsection{Model}
 This section considers a basic neoclassical growth model. We choose this model because it is a useful and simple benchmark for solving and estimating DSGEs, used for example in \cite{schmitt-grohe_solving_2004} and \cite{gomme_second_order_2011}.
 The model is based on the decisions of a representative household,
 which chooses between consumption $c_t$ and investment in next period's capital stock $k_t$.
 The household's goal is to maximise discounted lifetime utility, given by
 \[ U = \sum_{t=0}^\infty \beta^t \log c_t \] subject to a feasibility constraint,
 \begin{equation}
\label{dsge-feasibility-eqn}
c_t + k_t = A_t k_{t-1}^\alpha + (1-\delta)k_{t-1}\ ,
\end{equation}
where $\delta \in [0,1]$, and a productivity shock
\begin{equation}
\label{dsge-shock-eqn}
\log A_t = \rho \log A_{t-1} + \epsilon_t  \hspace{1cm} \epsilon_t \sim N(0, \sigma^2_\epsilon)\ .
\end{equation}
where $\rho \in (0,1)$. The solution of the model consists of equations \eqref{dsge-feasibility-eqn} and \eqref{dsge-shock-eqn} plus a consumption Euler equation,
\begin{equation}
\label{dsge-euler-eqn}
c_t^{-1} = \beta \mathbb{E}_t \left\{ c_{t+1}^{-1}
\left[ \alpha A_{t+1} k_t^{\alpha-1} + 1 - \delta  \right]
 \right\} .
\end{equation}
Here, $\mathbb{E}_tX$ denotes the model-consistent expectation of $X$, conditional on information available at time $t$.

This solution can be converted to a Markov process on the assumption of rational expectations as in \cite{klein_using_2000}. If the depreciation rate $\delta$ is below one, the conversion cannot be expressed in closed form, and some type of approximation must be used. We chose a second-order approximation, using the methods described in \cite{klein_using_2000} and \cite{gomme_second_order_2011}.
The output of these methods is a law of motion for the vector $x_t = \left( c_t, k_t, a_t \right)^\prime$ of the form
\begin{align}
\nonumber
x_t &= d + E x_{t-1} + F \epsilon_t + \left(\mathrm{I}_3 \otimes x_{t-1}^\prime \right) G x_{t-1}  \\
& \hspace{1cm} + \left(\mathrm{I}_3 \otimes x_{t-1}^\prime \right) H \epsilon_{t}  + \left(\mathrm{I}_3 \otimes \epsilon_{t}^\prime \right) J \epsilon_{t} .
\label{dsge-lom}
\end{align}
The reduced-form coefficient matrices $d$, $E$, $F$, $G$, $H$ and $J$ are functions of the structural parameters, but
must be calculated numerically as we do not have analytical expressions for them.

As is standard in the DSGE literature, we add the \lq measurement error\rq{} $\nu_t$ to the observation equation, in order to avoid stochastic singularity and for computational convenience, i.e.
\begin{equation}
\label{dsge-obs-eqn}
y_t = \left[1,0,0\right] x_t + \nu_t \hspace{1cm} \nu_t \sim N(0, \sigma^2_\nu)\ .
\end{equation}
For linear DSGE models,  $\nu_t$ is usually assumed to be small, with $\sigma^2_\nu$ many orders of magnitude smaller than $\sigma^2_\epsilon$. This assumption is sometimes relaxed for second-order estimation in order to reduce the sampling error of the standard particle filter. In some cases, such as the asset pricing model considered below, measurement error is an important part of the model, and the variance of the measurement noise is comparable to that of the innovations in the model's law of motion. Here, we maintain the assumption of a high \SNR, setting $\sigma^2_\nu$ to $10^{-8}$.
We set the rest of the parameters to fairly standard calibrated values \citep[see e.g.][]{gomme_second_order_2011, schmitt-grohe_solving_2004}. Specifically, we set $\beta = 0.99, \alpha =  1/3, \rho = 0.8,
\delta =  0.05$ and $\sigma_\epsilon^2 = 0.02^2$.

\subsubsection{Estimation}
To evaluate the performance of the ADPF in estimation, we simulated a data series of 50 observations using equations \eqref{dsge-lom} and \eqref{dsge-obs-eqn}. Again, we chose this length observations because it is of the same order of magnitude as a macroeconomic time-series. We then use the adaptive random walk Metropolis-Hastings \citep{roberts_optimal_2001} to take 100,000 draws from the parameter vector. We fix $\beta$ at 0.99. This is standard, as it is difficult to identify $\beta$. The vector of unknown parameters is $\theta=\left( \alpha, \rho, \delta, \sigma_\epsilon \right)$. Table~\ref{growth-priors-table} summarizes the priors
on the structural parameters, which are set relatively loosely to assist identification.

\begin{center}
{Table~\ref{growth-priors-table} about here. }\end{center}

We initialised the Metropolis-Hastings chain at the maximum likelihood estimate obtained from a first-order approximation of the model via the Kalman filter. We chose this initialisation method because we observed that the standard deviation of the log of the estimate of the likelihood obtained by the ADPF increased significantly in some areas of the support of $\theta$ away from the true values,  making it difficult for the Metropolis-Hastings algorithm to converge. Additionally, initialising the MCMC chain in this way has been the practice for second-order DSGE estimation using the standard particle filter, as in \cite{fernandez-villaverde_how_2007}
and \cite{amisano_euro_2010}. The Metropolis-Hastings proposal covariance matrix was initialised to a diagonal matrix of small positive values, with adaptation beginning after 100 draws.

We repeated this procedure using different numbers of particles in the particle filter and the ADPF (using the same simulated data). For each estimation run, we report the Metropolis-Hastings acceptance rate, the inefficiency, and the computation time. For each component of the parameter vector, the inefficiency is calculated as $IF = 1 + 2 \sum_{j=1}^{L^\star} \widehat{\rho}_j$, where $\widehat{\rho}_j$ is the estimated autocorrelation of the parameter iterates at lag $j$. If $K$ is the sample size used to compute $\widehat{\rho}_j$, then the maximum lag length is set to $L^\star = \min\{1000,L\}$, with $L$ being the lowest index $j$ such that $\left| \widehat{\rho}_j \right| < 2/\sqrt{K}$. 

Since the actual wall-clock estimation time depends heavily on the details of a particular implementation, we report instead a measure of computation time calculated as the number of evaluations of the model's law of motion required to produce one effectively independent draw from a given parameter. Thus, if $N$ particles use the law of motion an average of $k$ times each in an MCMC run with an inefficiency (as described above) given by $IF$, then the computation time is taken to be $CT = k \times N \times IF$. \cite{pitt_properties_2012} measure the  computation time as  $N \times IF$. Here, we ensure a fair comparison with the standard particle filter by penalising the ADPF for the multiple function evaluations required for optimising $\ell^k(u_t)$. We estimate $k$ by keeping a tally of the number of times the law of motion subroutine was called during estimation. For this model, the value of $k$ was around 16 per particle per observation.

In implementing the ADPF, we use the same approach as  in section~\ref{section: adpf implementation}. For the first-stage proposal density $g(y_{t+1}|x_{t})$, we use a normal distribution matching the first two moments of $y_{t+1}$ conditional on $x_{t}$, which can be calculated from equations (\ref{dsge-lom}) and (\ref{dsge-obs-eqn}).

Specifically, substituting (\ref{dsge-lom}) into (\ref{dsge-obs-eqn}), dropping the negligible measurement error term $\nu_t$, and taking expectations,  the mean of $y_{t}$ conditional on $x_{t-1}$ is
\begin{equation}
\mu_{y,t} = d_1 + E_1 x_{t-1}
	+ x_{t-1}^\prime G_{1} x_{t-1} +  \sigma^2_\epsilon J_{1}\ ,
\label{eq: conditional mean y}
\end{equation}
where $d_1$ and $E_1$ are the first element and row of $d$ and $E$, $G_1$ is the upper $3 \times 3$ blocks of $G$, and $J_1$ is the first element of $J$. Similarly, the conditional variance of $y_t$ is
\begin{equation}
\sigma_{y,t} = \sigma^2_\epsilon F_1^2 + 2 \sigma^2_\epsilon F_1 x_{t-1}^\prime H_1 +
\sigma^2_\epsilon \left( x_{t-1}^\prime H_1 \right)^2 + \sigma^4_\epsilon J_1^2 \ ,
\label{eq: conditional variance y}
\end{equation}
where $F_1$ is the first element of $F$, $H_1$ is the first row of $H$, and $J_1$ is the first row of $J$. Note that if we use a first-order approximation to the solution of the DSGE model, then $G = H = J = 0$, and the mean (\ref{eq: conditional mean y}) and variance (\ref{eq: conditional variance y}) are equal to the one-step prediction mean and variance from the Kalman filter (conditioned on a given value for $x_{t-1}$).

\subsubsection{Results and Analysis}

As expected, the estimated parameter values from all filters are very similar. However, there are notable differences in efficiency and computing time. Table \ref{acc-ineff-results-table} reports the Metropolis-Hastings acceptance rates for different numbers of particles used in the standard particle filter and the ADPF. It also shows the inefficiencies for each component of the parameter vector. As the number of particles used increases, so that the estimates of the loglikelihood become more precise, the acceptance rate increases. This is true for both the standard particle filter and the ADPF. Broadly speaking, the ADPF performs about as well with 50 particles as does the standard particle filter with several thousand particles. Conversely, to approach the performance of the ADPF with 300 particles, the standard particle filter must use about 10,000.

The differences in inefficiency are also reflected in the
estimates of computing time, which are reported in Table~\ref{ct-results-table}. As explained above, the estimates of computing time are a function both of the number of computations required to generate a given number of draws of the parameter vector, and also of the inefficiency of those draws. While the inefficiencies decrease steadily as the number of particles was increased, the computing time requires a tradeoff between higher numbers of particles (which reduces inefficiency) and lower numbers of particles (which directly reduces computing time). The optimal computing times occur with around 1,500 particles for the standard particle filter and 30 for the ADPF. The  computing time of the ADPF
is roughly one fifth of the standard particle filter. Since the implementation of the ADPF leaves scope for optimisation or parallel computing, this relative performance could be improved further in practice.\footnote{The optimisation step of the APDF algorithm can be performed in parallel for larger problems. Additionally, we penalise the ADPF for every evaluation of the law of motion; but, conditional on $x_{t-1}^k$, only half of equation~\eqref{dsge-lom} needs to be recalculated for a given value of $\epsilon_t$.}

Table \ref{ct-results-table} also shows the variance of each filter's loglikelihood estimates, evaluated by taking 75 repeated loglikelihood estimates at the true value of $\theta$. These results are broadly consistent with the asymptotic analysis in \cite{pitt_properties_2012}, which suggests that the optimal computing time would be attained when the loglikelihood variance is around 0.81. In the case of the ADPF, the optimal computing times occur when the variance is slightly higher than that. We conjecture that this is because the analysis in \cite{pitt_properties_2012} assumes that a perfect Metropolis-Hastings proposal distribution is available.

\begin{center}{Tables~\ref{acc-ineff-results-table} and \ref{ct-results-table} about here.} \end{center}

\subsection{Example 2: Asset pricing with habits}

\subsubsection{The Model}

This section demonstrates the full-information estimation of a structural asset pricing model, specifically, a consumption-based model with external habits
\citep{campbell_by_1999}. Previously, this type of model has been taken to the data by matching moments, e.g. \cite{campbell_by_1999}), using GMM , e.g. \cite{hyde_dont_2005}, or a linear approximation, e.g. \cite{bouakez_habit_2005}). Here, we estimate the likelihood directly.

The model assumes that the representative agent's consumption process is
\begin{equation}
\Delta \log C_t = g + \nu_t\ ,
\label{campbell-consumption}
\end{equation}
where $\nu \sim N(0, \sigma^2)$. The agent's utility function is given by
\begin{equation}
U_t = E_t \sum_{t=0}^\infty \beta^t \frac{\left(C_t - X_t\right)^{1-\gamma}}{1-\gamma}\ ,
\label{campbell-utility}
\end{equation}
where $X_t$ is the (external) habit stock, interpreted as the minimum level of consumption required to maintain a well-defined utility (i.e., the household must ensure that $C_t > X_t$). The surplus consumption ratio $S_t$ and the deviation
$\widetilde s_t$ of $\log S_t$ from its mean $\overline S$ are defined by
\[ S_t = \frac{C_t - X_t}{C_t} \quad \text{and} \quad  \widetilde{s}_t = \log S_t - \log \overline{S}\]
The law of motion of $\widetilde{s}_t$ is assumed to be
\begin{equation}
\widetilde{s}_t = \phi \widetilde{s}_{t-1} + \left(\frac{1}{\overline{S}}\sqrt{1 - 2\widetilde{s}_{t-1}} -1\right)\nu_t\ ,
\label{campbell-habit}
\end{equation}
where the disturbance $\nu_t$ is the same as the consumption innovation in equation (\ref{campbell-consumption}), and the steady-state level of $S_t$ is
\begin{equation}
\overline{S} = \sigma \sqrt{\frac{\gamma}{1-\phi}}\ .
\label{campbell-sbar-level}
\end{equation}
See \cite{campbell_by_1999} for discussion of the derivation of equations \eqref{campbell-habit} and \eqref{campbell-sbar-level}.  The ratio $S_t$ is stationary, since the level of the habit stock $X_t$ is constructed to grow at the same rate as $C_t$ in the long run. The nonlinear functional form for $\widetilde{s}_t$ means that habit is a slowly-moving average of recent consumption in `normal times' (when the habit stock is close to its mean) but responds more sensitively to consumption innovations during `bad times' (when $\widetilde{s}_t$ is low). This nonlinearity allows the model to address both the equity premium puzzle and the risk-free rate puzzle (see \cite{campbell_by_1999} for further discussion).

On that basis, it can be shown that the equilibrium price-dividend ratio of a financial asset   satisfies
\begin{equation}
\frac{P_t}{D_t} = \beta_t \mathbb{E}_t \left[ \exp\left[ \gamma(\widetilde{s}_t - \widetilde{s}_{t+1}) + (1-\gamma)(g + \nu_{t+1}) \right]\left(1 + \frac{P_{t+1}}{D_{t+1}}\right) \right]\ ,
\label{campbell-euler-eqn}
\end{equation}
where $\beta_t$ is the intertemporal discount factor in period $t$. This is the Fundamental Theorem of Finance---the current asset price equals the risk-neutral expectation of next period's asset price and return---using the functional forms implied by equations \eqref{campbell-utility} to \eqref{campbell-sbar-level} \citep{campbell_by_1999}. Since not all changes in the typical household's intertemporal trade-offs can be explained by consumption habits, we choose to perturb this parameter with a shock process that is a random walk in logs,
\begin{equation}
\beta_t = \overline{\beta} \mathrm{e}^{b_t}, \hspace{5mm} b_t = b_{t-1} + \epsilon_t\ ,
\label{campbell-beta-eqn}
\end{equation}
where $\epsilon_t \sim N(0, \sigma^2_\epsilon)$.

Equations (\ref{campbell-consumption}), (\ref{campbell-habit}), (\ref{campbell-euler-eqn}) and (\ref{campbell-beta-eqn}) characterise the model. The observed variables are $\Delta \log C_t$ and $\Delta \log \frac{P_t}{D_t}$, and the observation equations are (\ref{campbell-consumption}) and (\ref{campbell-euler-eqn}). We approximate equation (\ref{campbell-euler-eqn}) using a third-order Taylor series expansion in $\widetilde{s}$ at $\widetilde{s}=0$ because it is the simplest approximation that describes the nonlinear behaviour of the price-dividend ratio adequately. Thus equation (\ref{campbell-euler-eqn}) is approximated as
\begin{equation}
\log \frac{P_t}{D_t} = \Gamma_0 + \Gamma_1 \widetilde{s}_t + \Gamma_2 \widetilde{s}_t^2 + \Gamma_3 \widetilde{s}_t^3\ ,
\label{cambell-third-order-eqn}
\end{equation}
where the $\Gamma_i$ coefficients are functions of the structural parameters (see Appendix~\ref{A: asset pricing approximation}).

\subsubsection{Estimation}

We apply the model to observations of growth in the S{\&}P500 price-dividend ratio and US consumption using
 quarterly observations from 1950 to 2011, a total of 248 datapoints, that are plotted in Figure~\ref{fig: asset data}. The S{\&}P500 series is from \cite{shiller_irrational_2006}, while the consumption series is the seasonally adjusted real personal consumption expenditure series from the Bureau of Economic Analysis (series code PCECC96).

\begin{center}{Figure~\ref{fig: asset data} about here} \end{center}

Since it is unlikely that consumption is observed perfectly, we modify equation (\ref{campbell-consumption}) to include an iid noise term $\eta_t \sim N(0, \sigma^2_\eta)$. The restrictions placed on the joint distribution of consumption and asset price growth by the structural model allow us to identify this noise term separately from the consumption innovation $\nu_t$. We use a fairly tight prior distribution to constrain the variance of $\eta_t$,  because the structural model would otherwise struggle to improve on the assumption that consumption and price-dividend ratio growth are both iid. This exercise is intended to illustrate the ADPF estimation method, rather than provide a precise explanation of intertemporal saving decisions, and while the addition of external habits greatly improves the consumption-based asset pricing model, its chief virtue is still its simplicity, rather than its flexibility.

Table~\ref{campbell-priors-table} gives the prior distributions of the other parameters. The priors are chosen to ensure that they imply reasonable properties for the risk-free interest rate and consumption. Specifically, we select priors on $g$ and $\sigma_\nu^2$ to ensure that the underlying consumption growth series is close in mean and variance to the observed one. More importantly, \cite{campbell_by_1999} show that the implied level of the risk-free rate is
\begin{equation}
r^f = -\log \overline{\beta} + \gamma g - \left(\frac{\gamma}{\overline{S}}\right)^2 \frac{\sigma^2}{2}\ .
\label{campbell-rf-eqn}
\end{equation}
Instead of placing a prior on $\overline{\beta}$ directly, we choose a prior for $r^f$ that ensures it is a low positive number. Finally, we choose relatively loose priors for $\gamma$ and $\phi$.
Table~\ref{campbell-priors-table} summarises the prior distributions of the parameters,  which we assume are mutually independent.

\begin{center}{Table~\ref{campbell-priors-table}  about here.} \end{center}

To estimate the model, we took 25,000 adaptive Metropolis Hastings draws, with the proposal covariance matrix initialised to a small diagonal matrix, and adaptation beginning after 100 draws. Unlike the previous example, the posterior mode located using the Kalman filter and a first-order approximation to the model was not particularly close to the posterior mode using a higher-order approximation. For that reason, we initialised the chain at values close to the calibrations described in \cite{campbell_by_1999}. In implementing the ADPF, we used the same approach as in previous examples. In all cases, we discarded the first 1000 MCMC draws.

\subsubsection{Results and Analysis}

 Table~\ref{campbell-param-results-table} reports the posterior means and standard deviations (in brackets) of the parameters. Table~\ref{campbell-inefficiencies-table} reports the acceptance rates an and the inefficiency estimates of the parameters.
The table shows that the Metropolis-Hastings acceptance rates generally increase and the inefficiencies generally decrease
with a higher number of particles for both the standard particle filter and the ADPF. Broadly speaking, the performance of the ADPF with a given number of particles appears to be comparable to that of the standard particle filter with 15 or 20 times more particles.

Table~\ref{campbell-ct-table} presents the computing times, calculated in the same manner as described above. Here, a clearer difference emerges between the standard particle filter and the ADPF. The estimated computing times of the ADPF are roughly twice as fast as the standard particle filter's. In calculating these times, we penalised the use of the analytical derivative of equation (\ref{cambell-third-order-eqn}) equally as heavily as using the law of motion itself. If these analytical derivatives are discounted---as may happen in applications where the derivatives are considerably simpler than the transition equations---the performance of the ADPF is around 5 times better than the standard particle filter's.

\begin{center}{Tables~\ref{campbell-param-results-table}, \ref{campbell-inefficiencies-table}  and \ref{campbell-ct-table}
about here.}
\end{center}

Notably, the posterior mean of  $r^f$ is unchanged from its prior mean. However, despite the apparently weak identification of this parameter, the nonlinear estimation  reveals some amount of information about it. Figure~\ref{fig: campbell plots} is a scatterplot showing the values of $\overline{\beta}$ implied by the draws of $r^f$ against the draws of $\phi$. (The graph shows 1500 randomly-selected draws from the MCMC chain for the ADPF with 50 particles.) Perhaps surprisingly, a more persistent habit stock (higher $\phi$) is associated with more value placed on the future (higher $\overline{\beta}$). This is in fact consistent with the relationship implied by equation (\ref{campbell-euler-eqn}). To see this, substitute (\ref{campbell-habit}) into (\ref{campbell-euler-eqn}) and ignore shocks after period $t$,
\[ \frac{P_t}{D_t} \propto \overline{\beta} \exp\left[ \gamma(1-\phi)\widetilde{s}_t  \right]\left(1 + \frac{P_{t+1}}{D_{t+1}}\right),  \]
showing that a rise in $\overline{\beta}$ is, other things equal, associated with a fall in $(1-\phi)$.

\begin{center}{Figure~\ref{fig: campbell plots} about here}\end{center}

\section{Conclusion}
\label{section: conclusion}

The filter discussed in this paper offers an attractive alternative to the standard particle filter for estimating nonlinear structural models. Broadly speaking, in comparison to the standard filter, the ADPF requires much lower computing times for a given level of estimation accuracy.

\begin{appendices}
\section{Proofs} \label{Proof: Thm 1}
The proof can be derived from \cite{delmoral:2004} (Section 7.4.2, Proposition 7.4.1). 
However, we believe that it is easier to
follow the proof of Theorem 1 in \cite{pitt_properties_2012} and we do so here. Let $\swarm_t = \{(x_t^k,\pi_t^k), k=1, \dots, N\}$ be the swarm of particles at time $t$.

\begin{proof}[Proof of Theorem  \ref{Thm: unbiased}]
\begin{align}
\Exp \left ( \phat_N(y_t|y_{1:t-1}) |\swarm_{t-1} \right ) & = \sum_{k=1}^N p(y_t|x_{t-1}^k) \pi_{t-1}^k \label{eq: one step}\\
\Exp\left ( \phat_N ( y_{t-h:t}|y_{1:t-h-1}) |\swarm_{t-h-1}\right ) & = \sum_{k=1}^N p(y_{t-h:t}|x_{t-h-1}^k) \pi^k_{t-h-1} \label{eq: multistep}\\
\Exp \left ( \phat_N ( y_{1:t}) |\swarm_0 \right ) & = \sum_{k=1}^N p(y_{1:t}|x_{0}^k) \pi^k_{0} \label{eqn: at zero}\\
\Exp \left ( \phat_N ( y_{1:t} ) \right ) & = p(y_{1:t})  \label{eqn: likelihood}
\end{align}
Equations \eqref{eq: one step} and \eqref{eq: multistep} are obtained as in Lemmas 6 and 7 respectively of  \cite{pitt_properties_2012}.
Equation \eqref{eqn: at zero} is obtained by taking $h = t-1$ in  \eqref{eq: multistep} and  \eqref{eqn: likelihood} is obtained
from  \eqref{eqn: at zero} because $\swarm_0 = \{(x_0^k,\pi_0^k), k=1, \dots, N\}$ with $x_0^k \sim p(x_0)$ and $\pi_0^k = 1/N$.
\end{proof}

\section{Optimisation Method}\label{app: optimization}

While the optimization of $\ell (u^k_t | y_t, x_{t-1}^k)$ can be performed in many ways, we find that the Levenberg-Marquardt algorithm \citep{marquardt_algorithm_1963}, as implemented by \cite{more_user_1980}, delivers satisfactory results. This algorithm is applied sequentially to each particle at each time step. By default, it uses numerical differentiation to estimate $\widehat{g} = \frac{\partial \ell}{\partial u_i}$ and $\widehat{A} = \frac{\partial^2 \ell}{\partial u_i \partial u_j}$. In the case of the asset pricing model,  analytical derivatives are easy to calculate and
are used instead. At each step of the iteration, the proposed value of $u$ is
 \[ u^{(i)} = u^{(i-1)} - \left( \widehat{A} + \nu I \right)^{-1}\widehat{g}. \]
The value of $\nu$ is initialised at 10, then increased by a factor of 10 if the proposed value is rejected, and decreased by a factor of 10 if the proposed value is accepted. The algorithm is deemed to have converged if $\vectornorm{\widehat{g}} < 10^{-3}$, where
$\vectornorm{\widehat{g}}$ is the Euclidean norm, or if the sum of squared residuals (scaled by their standard deviations) is less than $10^{-5}$. The algorithm is terminated after a maximum of 10 iterations.

In each case, each component of $u^k_t$ is initialised with a draw from $N(0,2)$ (recall that each $u$ variate is, by construction, standard normal). We made this choice to ensure that the algorithm explores the tails of the distribution with reasonable probability.

\section{Asset Pricing Approximation}\label{A: asset pricing approximation}

Taking a third-order Taylor approximation of equation (\ref{campbell-euler-eqn}), then evaluating the coefficients with $\widetilde{s}_t = \widetilde{s}_{t+1} = b_t = \nu_{t+1} = 0$, gives the following values for the $\Gamma_i$ coefficients in equation (\ref{cambell-third-order-eqn}), where $G = \exp (g)$, where $g$ is the growth rate from equation~\eqref{campbell-consumption}:

\begin{align*} \Gamma_0 &= \frac{{\overline{\beta}} G}{G^\gamma - {\overline{\beta}} G} , \\
 \Gamma_1 &= \frac{(1-\phi)G^{\gamma+1}{\overline{\beta}} \gamma}{(G^\gamma - {\overline{\beta}} G)(G^\gamma - \phi {\overline{\beta}} G)},   \\
\Gamma_2 &= \frac{1}{2}\Gamma_0 \Gamma_1 \frac{(1-\phi)(G^\gamma - {\overline{\beta}} G)\gamma(G^\gamma + \phi {\overline{\beta}} G)}{{\overline{\beta}} G(G^\gamma - \phi {\overline{\beta}} G)} ,  \\
\Gamma_3 &= \frac{1}{6}\frac{(1-\phi)^3 G^{\gamma+1} {\overline{\beta}} \gamma^3(G^{2\gamma} + 2 {\overline{\beta}} \phi G^{\gamma+1} + 2 {\overline{\beta}} \phi^2 G^{\gamma+1} + {\overline{\beta}}^2 \phi^3 G^2)}{(G^\gamma - {\overline{\beta}} G)(G^\gamma - \phi {\overline{\beta}} G)(G^\gamma - \phi^2 {\overline{\beta}} G)(G^\gamma - \phi^3 {\overline{\beta}} G)} .
\end{align*}

\end{appendices}


\bibliographystyle{asa}
\bibliography{adpf1}

\clearpage
\begin{figure}[!h]
\centering
\mbox{
\subfigure{\includegraphics[width=6cm]{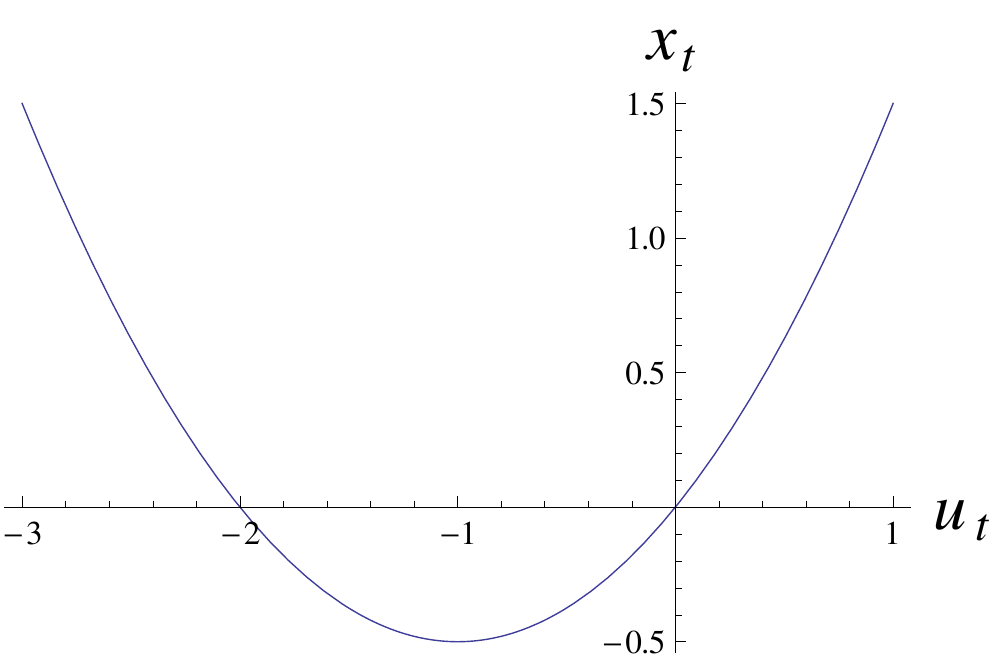}}\quad
\subfigure{\includegraphics[width=6cm]{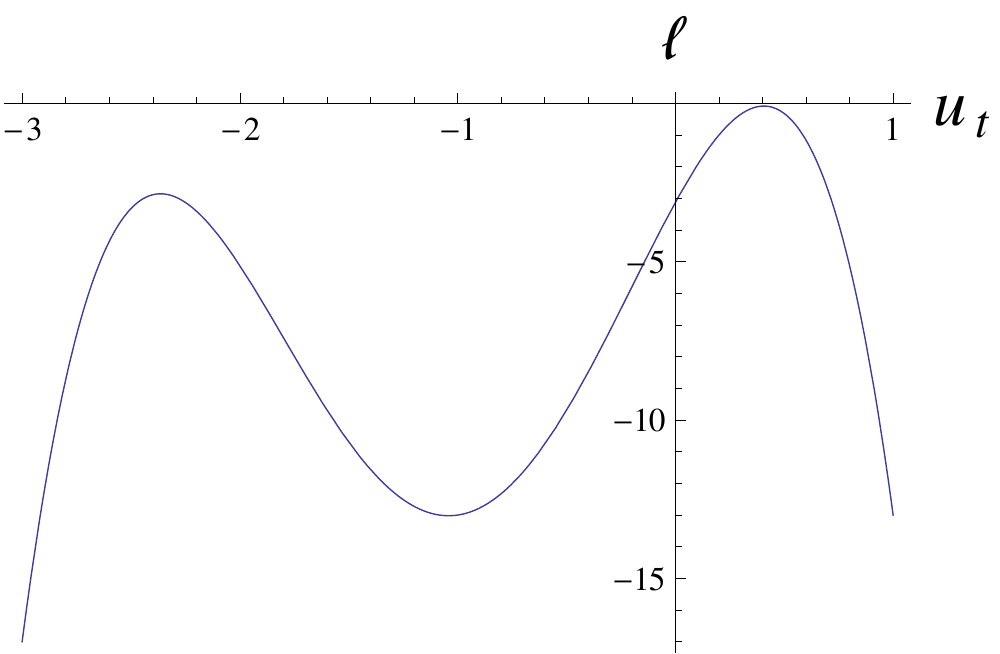} }}
\caption{New state $x_t$ as a function of shock $u_t$ (left) and log-posterior of $u_t$ (right) for a realisation of the quadratic AR(1) model}
\label{fig: bimodal densities}
\end{figure}

\begin{figure}[!h]
\centering
\mbox{
\subfigure{\includegraphics[width=3in]{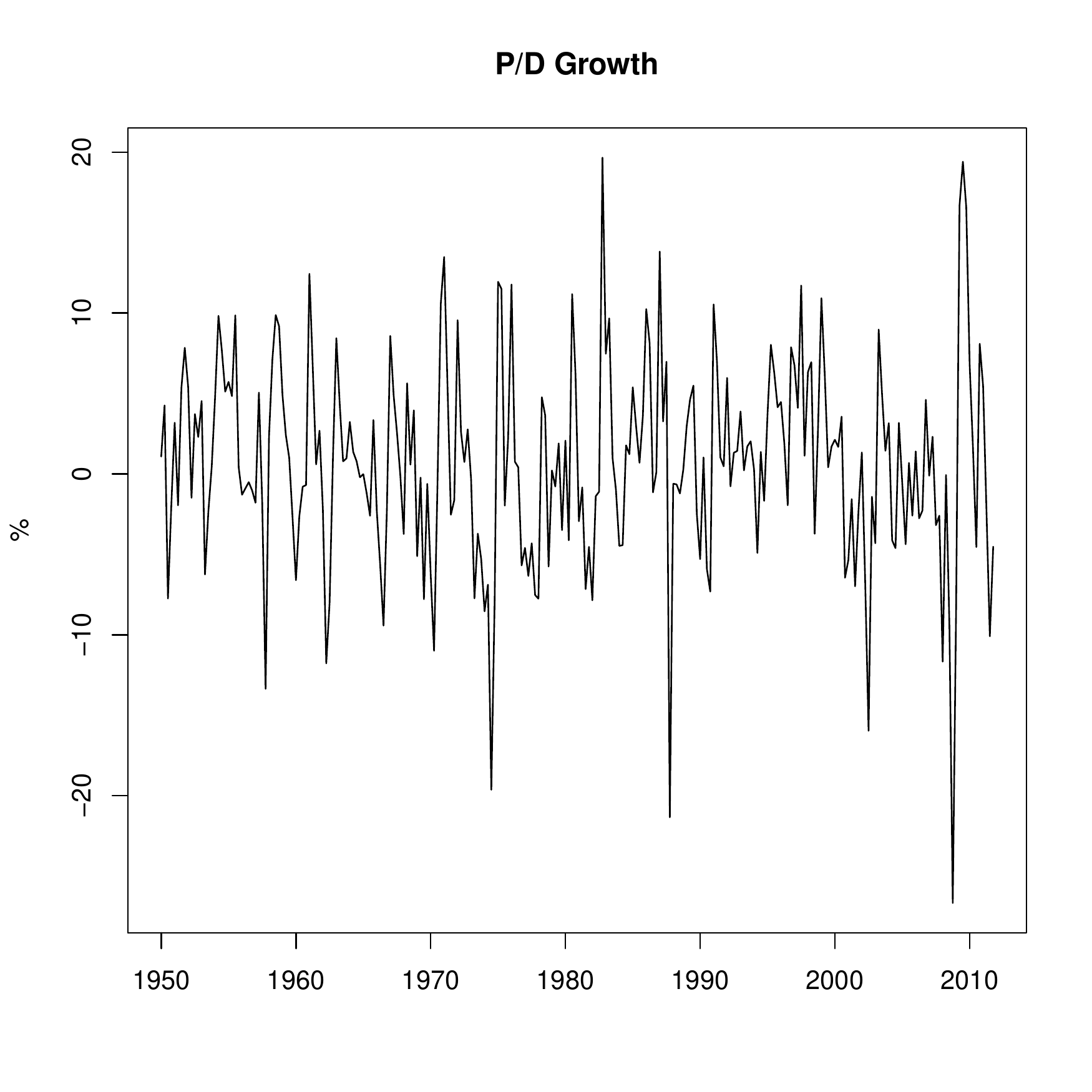}}\quad
\subfigure{\includegraphics[width=3in]{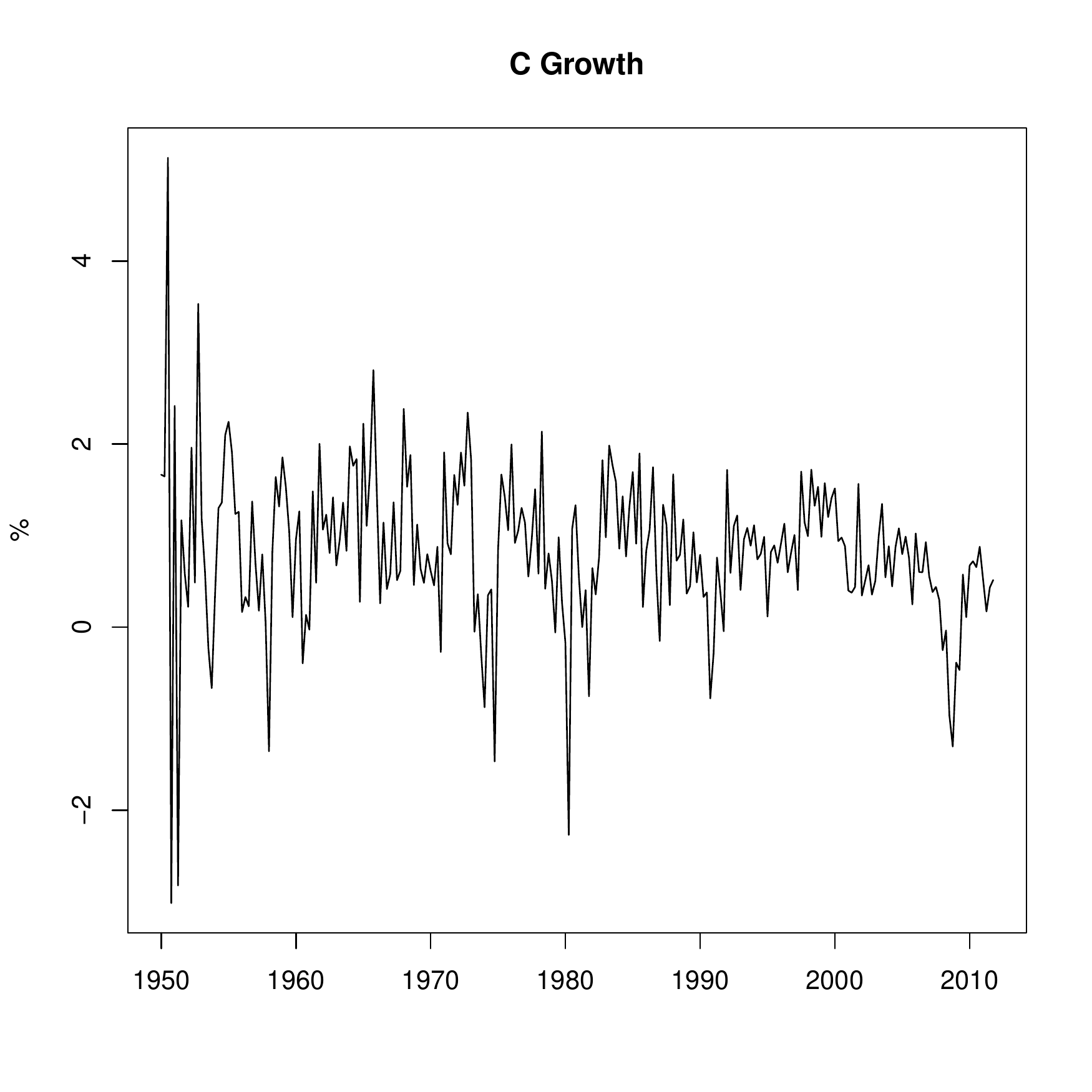} }
}
\caption{Time series plots of $P/D$ growth and $C$ growth used for the asset pricing model}
\label{fig: asset data}
\end{figure}

\begin{figure}[h!]
\centering
\includegraphics[width=4in]{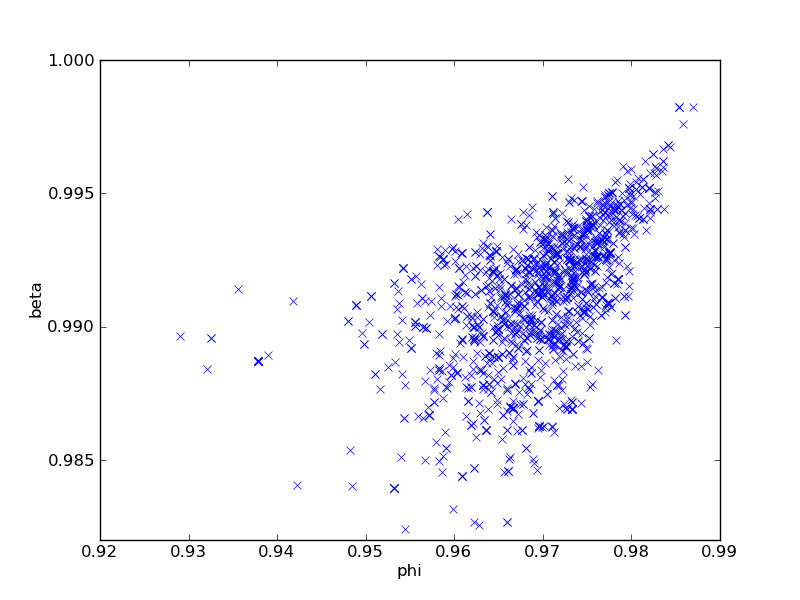}
\caption{Bivariate scatter-plot of the MCMC draws of $\beta$ and $\phi$ for the asset pricing model}
\label{fig: campbell plots}
\end{figure}

\begin{table}[!h]]
\centering
\begin{tabular}{c|ccc}
\hline
\rule{0pt}{15pt}$N$ particles & Median loglikelihood & IQR & Median std. dev. \\
\hline
\multicolumn{4}{c}{\rule{0pt}{11pt}Standard Particle Filter} \\
\hline
\rule{0pt}{11pt}100 & -3241.7 & 2876.8 & 4028.3  \\
500 & -285.8 & 588.7 & 847.8 \\
1000 & -118.8 & 500.9 & 334.4 \\
2000 & -80.0 & 206.9 & 104.1 \\
5000 & -69.6 & 35.1 & 21.7 \\
7500 & -68.3 & 37.7 & 9.0 \\
15000 & -67.7 & 6.4 & 2.3 \\
\hline
\multicolumn{4}{c}{\rule{0pt}{11pt}CUPF1} \\
\hline
\rule{0pt}{11pt}
50 & -7587.2 & 3203.8 & 6278.5 \\
100 & -3255.3 & 1881.4 & 4197.5 \\
150 & -1924.8 & 1278.6 & 3048.9 \\
\hline
\multicolumn{4}{c}{\rule{0pt}{11pt}Auxiliary Disturbance Particle Filter} \\
\hline
\rule{0pt}{11pt}
50 & -67.2 & 0.15 & 0.51  \\
\hline
\end{tabular}
\caption{Quadratic AR(1) model. Low nonlinearity, high \SNR}
\label{low-non-high-snr-table}
\end{table}

\begin{table}[!h]
\begin{tabular}{c|ccc}
\hline
\rule{0pt}{15pt}$N$ particles & Median loglikelihood & IQR & Median std. dev. \\
\hline
\multicolumn{4}{c}{\rule{0pt}{11pt}Standard Particle Filter} \\
\hline
\rule{0pt}{11pt}100 & -2062.7 & 3722.3 & 3251.4  \\
500  & -122.5 & 545.8 & 240.4  \\
1000  & -56.0 & 98.5 & 53.8  \\
2000 & -40.3 & 49.9 & 11.7  \\
5000 & -37.3 & 4.8 & 2.5  \\
7500 & -36.8 & 1.84 & 1.55  \\
15000 & -36.5 & 0.43 & 0.91  \\
\hline
\multicolumn{4}{c}{\rule{0pt}{11pt}CUPF1} \\
\hline
\rule{0pt}{11pt}
50 & -7915.0 & 12404 & 9644.9  \\
100 & -2011.1 & 6099.1 & 3618.4  \\
150 & -960.3 & 3168.2 & 2149.3  \\
\hline
\multicolumn{4}{c}{\rule{0pt}{11pt}Auxiliary Disturbance Particle Filter} \\
\hline
\rule{0pt}{11pt}
50 & -38.2 & 0.3 & 1.2  \\
\hline
\end{tabular}
\caption{Quadratic AR(1) model. High nonlinearity, high \SNR.}
\label{high-non-high-snr-table}
\end{table}

\begin{table}[ph]
\centering
\begin{threeparttable}
\begin{tabular}{c|cc}
\hline
\rule{0pt}{15pt}$N$ particles & Loglikelihood variance & Loglikelihood bias  \\
\hline
\multicolumn{3}{c}{\rule{0pt}{11pt}Standard Particle Filter} \\
\hline
\rule{0pt}{11pt}100 & 0.3824 {\footnotesize(.3816,.3837)} & -0.190 \\
500 & 0.0809 {\footnotesize(0.0807,0.0812)}  & -0.028 \\
1000 & 0.0403 {\footnotesize(0.0402,0.0404)} & -0.001 \\
2000 & 0.0211 {\footnotesize(0.0211,0.0212)} & -0.006 \\
5000 & 0.0081 {\footnotesize(0.00808,0.00813)} & 0.010 \\
7500 & 0.0052 {\footnotesize(0.00519,0.00522)} & 0.007 \\
15000 & 0.0027 {\footnotesize(0.00269,0.00271)} & 0.010 \\
\hline
\multicolumn{3}{c}{\rule{0pt}{11pt}CUPF1} \\
\hline
\rule{0pt}{11pt}
50 & 0.8731 {\footnotesize(0.8712,0.8761)} & -0.416 \\
100 & 0.3639 {\footnotesize(0.3631,0.3652)} & -0.196 \\
150 & 0.2869 {\footnotesize(0.2863,0.2879)} & -0.121 \\
\hline
\multicolumn{3}{c}{\rule{0pt}{11pt}Auxiliary Disturbance Particle Filter} \\
\hline
\rule{0pt}{11pt}
50 & 0.1076 {\footnotesize(0.1074,0.1080)}  & -0.117  \\
\hline
\end{tabular}
\caption{Quadratic AR(1) model.  Low nonlinearity, low \SNR. The
95\% confidence intervals for the estimated variances are  in brackets.}
\label{low-non-low-snr-bias-table}
\end{threeparttable}
\end{table}

\begin{table}[!h]
\centering
\begin{threeparttable}
\begin{tabular}{c|cc}
\hline
\rule{0pt}{15pt}$N$ particles & Loglikelihood variance & Loglikelihood bias  \\
\hline
\multicolumn{3}{c}{\rule{0pt}{11pt}Standard Particle Filter} \\
\hline
\rule{0pt}{11pt}100 & 14.99 {\footnotesize (14.96,15.04)} & -2.90 \\
500 & 0.8199 {\footnotesize (0.8181,0.8228)} & -0.30 \\
1000 & 0.3026 {\footnotesize (0.3020,0.3037)} & -0.13 \\
2000 & 0.1478 {\footnotesize(0.1475,0.1483)} & -0.05 \\
5000 & 0.0554 {\footnotesize(0.0553,0.0556)} & -0.001 \\
7500 & 0.0354 {\footnotesize(0.0353,0.0355)}  & -0.008 \\
15000 & 0.0168 {\footnotesize(0.0168,0.0169)} & 0.006 \\
\hline
\multicolumn{3}{c}{\rule{0pt}{11pt}CUPF1} \\
\hline
\rule{0pt}{11pt}
50 & 43.4 {\footnotesize(43.3,43.6)} & -6.50 \\
100 & 16.6 {\footnotesize(16.6,16.7)} & -3.06 \\
150 & 8.17 {\footnotesize(8.16,8.20)} & -1.66 \\
\hline
\multicolumn{3}{c}{\rule{0pt}{11pt}Auxiliary Disturbance Particle Filter} \\
\hline
\rule{0pt}{11pt}
50 & 0.623 {\footnotesize(0.622,0.626)} & -0.57 \\
\hline
\end{tabular}
\caption{Quadratic AR(1) model.  High nonlinearity, low \SNR.
95\% confidence intervals for the estimated variances are  in brackets.}
\label{high-non-low-snr-bias-table}
\end{threeparttable}
\end{table}

\begin{table}[!h]
\centering
\begin{threeparttable}
\begin{tabular}{c|cc}
\hline
\rule{0pt}{15pt}$N$ particles & Loglikelihood variance & Loglikelihood bias  \\
\hline
\multicolumn{3}{c}{\rule{0pt}{11pt}Standard Particle Filter} \\
\hline
\rule{0pt}{11pt}100 & 1.62$\times 10^7$ {\footnotesize(1.61$\times 10^7$,1.63$\times 10^7$)}& -4396.2 \\
500 & 7.19$\times 10^5$ {\footnotesize(7.17$\times 10^5$,7.21$\times 10^5$)}& -568.6 \\
1000 & 1.12$\times 10^5$ {\footnotesize(1.116$\times 10^5$,1.122$\times 10^5$)} & -198.6 \\
2000 & 1.083$\times 10^4$ {\footnotesize(1.081$\times 10^4$,1.087$\times 10^4$)} & -53.6 \\
5000 & 470.3 {\footnotesize(469.2,471.9)} & -9.1 \\
7500 & 81.8 {\footnotesize(81.7,82.1)} & -3.8 \\
15000 & 5.33 {\footnotesize(5.32,5.35)} & -1.0  \\
\hline
\multicolumn{3}{c}{\rule{0pt}{11pt}CUPF1} \\
\hline
\rule{0pt}{11pt}
50 & 3.94$\times 10^7$ {\footnotesize(3.93$\times 10^7$,3.96$\times 10^7$)} & -8611.5 \\
100 & 1.762$\times 10^7$  {\footnotesize(1.758$\times 10^7$,1.768$\times 10^7$)} & -4538.7 \\
150 & 9.30$\times 10^6$  {\footnotesize(9.28$\times 10^6$,9.33$\times 10^6$)} & -2972.2 \\
\hline
\multicolumn{3}{c}{\rule{0pt}{11pt}Auxiliary Disturbance Particle Filter} \\
\hline
\rule{0pt}{11pt}
50 & 0.2607 {\footnotesize(0.2601,0.2616)} & -0.05 \\
\hline
\end{tabular}
\caption{Quadratic AR(1) model. Low nonlinearity, high \SNR.
95\% confidence intervals for the estimated variances are in brackets.}
\label{low-non-high-snr-bias-table}
\end{threeparttable}
\end{table}

\begin{table}[!h]
\centering
\begin{threeparttable}
\begin{tabular}{c|cc}
\hline
\rule{0pt}{15pt}$N$ particles & Loglikelihood variance & Loglikelihood bias \\
\hline
\multicolumn{3}{c}{\rule{0pt}{11pt}Standard Particle Filter} \\
\hline
\rule{0pt}{11pt}100 & 1.057$\times 10^7$ {\footnotesize(1.055$\times 10^7$,1.061$\times 10^7$)} & -3129.4 \\
500 & 5.781$\times 10^4$ {\footnotesize(5.768$\times 10^4$,5.801$\times 10^4$)} & -156.4  \\
1000 & 2897.9 {\footnotesize(2891.7,2908.0)} & -37.5  \\
2000 & 135.7 {\footnotesize(135.4,136.2)} & -7.8  \\
5000 & 6.09 {\footnotesize(6.07,6.11)} & -1.27  \\
7500 & 2.394 {\footnotesize(2.389,2.402)} & -0.57  \\
15000 & 0.821 {\footnotesize(0.820,0.824)} & -0.10  \\
\hline
\multicolumn{3}{c}{\rule{0pt}{11pt}CUPF1} \\
\hline
\rule{0pt}{11pt}50 & 9.30$\times 10^7$ {\footnotesize(9.28$\times 10^7$,9.33$\times 10^7$)} & -1.10$\times 10^4$ \\
100 & 1.309$\times 10^7$ {\footnotesize(1.306$\times 10^7$,1.314$\times 10^7$)} & -3194.6 \\
150 & 4.62$\times 10^6$ {\footnotesize(4.61$\times 10^6$,4.64$\times 10^6$)} & -1653.8 \\
\hline
\multicolumn{3}{c}{\rule{0pt}{11pt}Auxiliary Disturbance Particle Filter} \\
\hline
\rule{0pt}{11pt}
50 & 1.522 {\footnotesize(1.519,1.527)} & -1.90 \\
\hline
\end{tabular}
\caption{Quadratic AR(1) model. High nonlinearity, high \SNR.
95\% confidence intervals for the estimated variances are in brackets.}
\label{high-non-high-snr-bias-table}
\end{threeparttable}
\end{table}

\begin{table}[!h]
\centering
\begin{center}
\begin{tabular*}{0.7\textwidth}{l|lccc}
\hline
\rule{0pt}{15pt}Parameter & Distribution & Mean & Std. dev. \\
\hline
$\rho$ & Beta & 0.8 & 0.1 \\
$\alpha$  & Beta & 0.333 & 0.015 \\
$\delta$ (\%) & Gamma & 0.5 & 0.07 \\
$\sigma_\epsilon$ & Gamma & 0.01 & 0.01 \\
\hline
\end{tabular*}
\caption{Priors for the structural parameters for the  neoclassical growth model.}
\label{growth-priors-table}
\end{center}
\end{table}

\begin{table}[!h]
\centering
\begin{tabular}{c|ccccc}
\hline
\rule{0pt}{15pt}$N$ particles & Acceptance rate & \multicolumn{4}{c}{Inefficiencies} \\
\hline & & $\alpha$ & $\delta$ & $\rho$ & $\sigma_\epsilon$
\\
\hline
\multicolumn{6}{c}{\rule{0pt}{11pt}Standard Particle Filter} \\
\hline
\rule{0pt}{11pt}500 & 2.1 & 769.7 & 580.4 & 491.2 & 404.1 \\
1500 & 12.3 & 43.4 & 87.0 & 52.7 & 56.2 \\
5000 & 21.8 & 18.3 & 19.5 & 20.4 & 18.7 \\
10000 & 24.9 & 16.8 & 14.5 & 18.0 & 16.5 \\
\hline
\multicolumn{6}{c}{\rule{0pt}{11pt}Auxiliary Disturbance Particle Filter} \\
\hline
\rule{0pt}{11pt}30 & 15.2 & 41.1 & 27.5 & 31.2 & 31.8 \\
50 & 19.1 & 25.0 & 22.1 & 21.3 & 23.4 \\
75 & 21.0 & 23.3 & 20.3 & 20.5 & 20.4 \\
100 & 23.3 & 18.0 & 17.5 & 16.4 & 16.7 \\
150 & 24.6 & 22.5 & 16.1 & 17.5 & 18.8 \\
200 & 25.1 & 19.2 & 16.0 & 17.0 & 16.8 \\
300 & 25.9 & 15.7 & 13.5 & 14.5 & 16.9 \\
\hline
\end{tabular}
\caption{Metropolis-Hastings acceptance rates and inefficiencies of the parameter estimates for the growth model.}
\label{acc-ineff-results-table}
\end{table}

\begin{table}[!h]
\centering
\begin{threeparttable}
\begin{tabular}{c|ccccc}
\hline
\rule{0pt}{15pt}$N$ particles & Loglikelihood variance & \multicolumn{4}{c}{Computing time / $10\mathrm{e}^{5}$} \\
\hline & & $\alpha$ & $\delta$ & $\rho$ & $\sigma_\epsilon$
\\
\hline
\multicolumn{6}{c}{\rule{0pt}{11pt}Standard Particle Filter} \\
\hline
\rule{0pt}{11pt}500 & 24.03 & 192.4 & 145.1 & 122.8 & 101.0 \\
1500 & 2.03 & 32.5 & 65.3 & 39.5 & 42.2 \\
5000 & 0.68 & 45.8 & 48.7 & 51.1 & 46.7 \\
10000 & 0.28 & 84.0 & 72.3 & 90.2 & 82.3 \\
\hline
\multicolumn{6}{c}{\rule{0pt}{11pt}Auxiliary Disturbance Particle Filter} \\
\hline
\rule{0pt}{11pt}30 & 1.72 & 9.9 & 6.6 & 7.5 & 7.6 \\
50 & 1.06 & 10.0 & 8.8 & 8.5 & 9.4 \\
75 & 0.67 & 14.0 & 12.2 & 12.3 & 12.2 \\
100 & 0.39 & 14.4 & 14.0 & 13.1 & 13.4 \\
150 & 0.43 & 27.0 & 19.3 & 21.0 & 22.6 \\
200 & 0.25 & 30.8 & 25.6 & 27.1 & 26.9 \\
300 & 0.15 & 37.6 & 32.5 & 34.8 & 40.5 \\
\hline
\end{tabular}
\footnotesize
\caption{Loglikelihood variances and computing times for the neoclassical growth model.
The variances of the loglikelihoods are calculated at the true parameter values, $\alpha =  1/3$, $\rho = 0.8$, $\delta =  0.05$ and $\sigma_\epsilon^2 = 0.02^2$. The value of $k=16$ is used in calculating computing times.}
\end{threeparttable}
\label{ct-results-table}
\end{table}

\begin{table}[!h]
\centering
\begin{center}
\begin{threeparttable}
\begin{tabular*}{0.7\textwidth}{l|lccc}
\hline
\rule{0pt}{15pt}Parameter & Distribution & Mean & Std. dev. \\
\hline
$\gamma$ & Gamma & 2 & 0.5 \\
$g$ (\%) & Gamma & 1.9 & 0.15 \\
$r^f$ (\%) & Normal & 1.0 & 0.01 \\
$\phi$ & Beta & 0.8 & 0.1 \\
$\sigma_\nu$ (\%) & Gamma & 0.8 & 0.03 \\
$\sigma_\eta$ (\%) & Gamma & 0.1 & 0.03 \\
$\sigma_\epsilon$ (\%) & Gamma & 5 & 0.7 \\
\hline
\end{tabular*}
\caption{Priors for the structural parameters in the asset pricing model.
Values for $g$ and $r^f$ and their standard deviations are in annualised percentage terms.
The prior distribution of $r^f$ is truncated to have positive support.}
\label{campbell-priors-table}
\end{threeparttable}
\end{center}
\end{table}

 \begin{table}[!h]
 \centering
\begin{threeparttable}
\begin{tabular}{c|ccccccc}
\hline
\rule{0pt}{15pt}$N$ particles &  \multicolumn{7}{c}{Posterior Mean (Std. Dev. in brackets)} \\
\hline & $g$ (\%) & $\gamma$ & $\phi$ & $r^f$ (\%) & $\sigma_\epsilon$ (\%) & $\sigma_\eta$ (\%) & $\sigma_\nu$ (\%)
\\
\hline
\multicolumn{8}{c}{\rule{0pt}{11pt}Standard Particle Filter} \\
\hline
\rule{0pt}{11pt}200 & 2.9 (0.28)& 0.93 (0.356)& 0.97 (0.007)& 1.0 (0.01)& 7.4 (0.43)& 0.8 (0.07)& 0.8 (0.05)\\
500 & 2.8 (0.20)& 0.87 (0.334)& 0.97 (0.008)& 1.0 (0.01)& 7.5 (0.50)& 0.8 (0.06)& 0.8 (0.04)\\
1000 & 2.9 (0.22)& 0.94 (0.399)& 0.97 (0.009)& 1.0 (0.01)& 7.3 (0.48)& 0.8 (0.08)& 0.8 (0.03)\\
\hline
\multicolumn{8}{c}{\rule{0pt}{11pt}Auxiliary Disturbance Particle Filter} \\
\hline
\rule{0pt}{11pt}30 & 2.9 (0.19)& 0.86 (0.309)& 0.97 (0.007)& 1.0 (0.01)& 7.4 (0.46)& 0.8 (0.06)& 0.8 (0.02)\\
50 & 2.8 (0.21)& 0.85 (0.309)& 0.97 (0.008)& 1.0 (0.01)& 7.4 (0.43)& 0.8 (0.06)& 0.8 (0.03)\\
\hline
\end{tabular}
\caption{Parameter estimates for the asset pricing model with standard errors in brackets. Values for $g$ and $r^f$ and their standard deviations are in annualised percentage terms.}
\label{campbell-param-results-table}
\end{threeparttable}
\end{table}

\begin{table}[!h]
\centering
\begin{tabular}{c|cccccccc}
\hline
\rule{0pt}{15pt}$N$ particles &  Acc. rate & \multicolumn{7}{c}{Inefficiencies} \\
 & & $g$ & $\gamma$ & $\phi$ & $r^f$ & $\sigma_\epsilon$ & $\sigma_\eta$ & $\sigma_\nu$
\\
\hline
\multicolumn{9}{c}{\rule{0pt}{11pt}Standard Particle Filter} \\
\hline
\rule{0pt}{11pt}200 & 5.0 & 115.8 & 195.9 & 444.5 & 129.0 & 477.6 & 359.9 & 128.7 \\
500 & 11.4 & 59.7 & 63.1 & 58.7 & 63.1 & 48.9 & 52.9 & 34.3 \\
1000 & 13.9 & 39.6 & 42.7 & 48.3 & 42.6 & 57.8 & 30.5 & 55.5 \\
\hline
\multicolumn{9}{c}{\rule{0pt}{11pt}Auxiliary Disturbance Particle Filter} \\
\hline
\rule{0pt}{11pt}30 & 11.4 & 52.2 & 38.0 & 50.0 & 112.5 & 49.3 & 51.9 & 54.9 \\
50 & 14.1 & 36.5 & 85.0 & 98.7 & 58.2 & 48.4 & 35.7 & 47.9 \\
\hline
\end{tabular}
\caption{MCMC parameter inefficiencies for the asset pricing model.}
\label{campbell-inefficiencies-table}
\end{table}

\begin{table}[!h]
\centering
\begin{threeparttable}
\begin{tabular}{c|ccccccc}
\hline
\rule{0pt}{15pt}$N$ particles &  \multicolumn{6}{c}{Computing Times} \\
\hline & $g$ & $\gamma$ & $\phi$ & $r^f$ & $\sigma_\epsilon$ & $\sigma_\eta$ & $\sigma_\nu$
\\
\hline
\multicolumn{7}{c}{\rule{0pt}{11pt}Standard Particle Filter} \\
\hline
\rule{0pt}{11pt}200  & 57.4 & 97.2 & 220.5 & 64.0 & 236.9 & 178.5 & 63.8 \\
500  & 74.0 & 78.2 & 72.8 & 78.2 & 60.6 & 65.6 & 42.5 \\
1000  & 98.3 & 105.8 & 119.8 & 105.6 & 143.2 & 75.6 & 137.7 \\
\hline
\multicolumn{7}{c}{\rule{0pt}{11pt}Auxiliary Disturbance Particle Filter} \\
\hline
\rule{0pt}{11pt}30  & 27.6 & 20.1 & 26.4 & 59.4 & 26.0 & 27.4 & 29.0 \\
50  & 32.1 & 74.8 & 86.9 & 51.3 & 42.6 & 31.4 & 42.2 \\
\hline
\end{tabular}
\caption{MCMC computing times for the parameters of the asset pricing model using the factor  $k = 7.1$ for the ADPF. }
\label{campbell-ct-table}
\end{threeparttable}
\end{table}

\end{document}